\numberwithin{equation}{section}
\theoremstyle{plain}
\newtheorem{theorem}{Theorem}
\newtheorem{lemma}[theorem]{Lemma}
\newtheorem{remark}[theorem]{Remark}
\begin{document}

\title[Optimal Strategies for a Long-Term Static Investor]{Optimal Strategies for a Long-Term Static Investor}

\author{LINGJIONG ZHU}
\address
{Courant Institute of Mathematical Sciences\newline
\indent New York University\newline
\indent 251 Mercer Street\newline
\indent New York, NY-10012\newline
\indent United States of America}
\email{ling@cims.nyu.edu}

\date{21 May 2014. \textit{Revised:} 21 May 2014}
\subjclass[2000]{91B28, 91B70, 60H30.}
\keywords{Long-Term growth rate, optimal strategies, Heston model, 3/2 model, jump diffusion, Vasicek model.}

\begin{abstract}
The optimal strategies for a long-term static investor are studied. Given a portfolio of a stock and a bond, we derive
the optimal allocation of the capitols to maximize the expected long-term growth rate of a utility function of the wealth.
When the bond has constant interest rate, three models for the underlying stock price processes are studied: Heston model, 3/2 model
and jump diffusion model. We also study the optimal strategies for a portfolio in which the stock price process follows a Black-Scholes
model and the bond process has a Vasicek interest rate that is correlated to the stock price.
\end{abstract}

\maketitle

\section{Introduction}

In this article, we are interested in the long-term optimal strategies for a static investor.
The investor starts with a known initial wealth $V_{0}>0$ and the wealth at time $t$ is denoted
by $V_{t}$. The investor decides what fraction of wealth $\alpha_{t}$ to invest in a stock $S_{t}$ and the remaining $1-\alpha_{t}$
in a bond $r_{t}$, i.e.
\begin{equation}
\frac{dV_{t}}{V_{t}}=\alpha_{t}\frac{dS_{t}}{S_{t}}+(1-\alpha_{t})r_{t}dt.
\end{equation}

For a static investor, we assume that $\alpha_{t}\equiv\alpha$ is a constant between $0$ and $1$, i.e. $\alpha\in[0,1]$.

We consider a hyperbolic absolute risk aversion (HARA) utility function $u(c)$
with constant relative risk aversion coefficient $\gamma\in(0,1)$, i.e.
\begin{equation}
u(c)=\frac{c^{1-\gamma}}{1-\gamma},\quad 0<\gamma<1.
\end{equation}

We are interested in the optimal strategy to maximize the long-term growth rate, i.e.
\begin{equation}
\max_{0\leq\alpha\leq 1}\lim_{t\rightarrow\infty}\frac{1}{t}\log E[u(V_{t})]=\max_{0\leq\alpha\leq 1}\Lambda(\alpha),
\end{equation}
if the long-term growth rate $\Lambda(\alpha):=\lim_{t\rightarrow\infty}\frac{1}{t}\log E[u(V_{t})]$ exists
for any $0\leq\alpha\leq 1$. 
For the convenience of notation, let $\theta:=1-\gamma\in(0,1)$ and thus we are interested in
\begin{equation}
\max_{0\leq\alpha\leq 1}\lim_{t\rightarrow\infty}\frac{1}{t}\log E[(V_{t})^{\theta}].
\end{equation}

The optimal long-term growth rate of expected utility of wealth has been well studied in the literatures.
Usually, the optimal strategy is taken to be dynamic and some dynamic programming equations are studied, see e.g. Fleming and Sheu \cite{Fleming}.
In this article, we only concentrate on the static strategies for the simplicity. This set-up allows us
to gain analytical tractability for some more sophisticated models like Heston model and 3/2 model. 

The problem of maximizing the long-term expected utility is closely related to maximizing the probability
that the wealth exeeds a given benchmark for large time horizon, i.e. 
$\max_{0\leq\alpha\leq 1}\lim_{t\rightarrow\infty}\frac{1}{t}\log P(V_{t}\geq V_{0}e^{xt})$, where $x$ is a given benchmark. 
In a static framework, an asymptotic outperformance
criterion was for example considered in Stutzer \cite{Stutzer}. An asymptotic dynamic version
of the outperformance managment criterion was developed by Pham \cite{Pham}. To find the optimal strategy
for the long-term growth rate, the first step is to compute the limit $\lim_{t\rightarrow\infty}\frac{1}{t}\log E[u(V_{t})]$.
Under most of the standard models, the wealth process $V_{t}$ has exponential growth rate and the existence
of a logarithmic moment generating function plus some additional conditions can be used to obtain a large deviation
principle for the probability that the wealth process outperformances a given benchmark, which is exponentially small.
The connection is provided by G\"{a}rtner-Ellis theorem, see e.g. Dembo and Zeitouni \cite{Dembo}. For a survey
on the applications of large deviations to finance, we refer to Pham \cite{PhamII}.

In this article, we study in detail the optimal strategies for a static investor investing in stocks
of two stochastic volatility models, 
i.e. the Heston model (Section \ref{HSection}), 
the 3/2 model (Section \ref{ThreeHalvesSection}) 

The Heston model, introduced by Heston \cite{Heston} is a widely used
stochastic volatility model. The volatility process is
itself a Cox-Ross-Ingersoll process, which is an affine model and
has great analytical tractability. The 3/2 model is another popular
model of stochastic volatility. It has been applied to interest
rate modeling, e.g. Ahn and Gao \cite{Ahn}. 
Carr and Sun \cite{Carr} used the 3/2 model to price variance swaps
and Drimus \cite{Drimus} used it to price options on realized variance.

Next, we study the optimal long-term static investment strategies
when the underlying stock process follows a jump diffusion model 
(Section \ref{JSection}) 
assuming the alternative investment bond has constant short-rate.
Finally, we study the case 
when the stock follows a classical Black-Scholes model 
while the bond has a Vasicek interest rate (Section \ref{VSection}).

As an illustration, let us first consider a toy model. 
Assume that the stock price follows a geometric Brownian motion with constant drift $\mu>0$ and constant
volatility $\sigma>0$ and the bond has constant short-rate $r>0$. We can write down 
a stochastic differential equation for the wealth process $V_{t}$,
\begin{equation}
dV_{t}=\alpha\mu V_{t}dt+\alpha\sigma V_{t}dB_{t}+(1-\alpha)rV_{t}dt,
\end{equation}
where $B_{t}$ is a standard Brownian motion starting at $0$ at time $0$ and therefore
\begin{equation}
V_{t}=V_{0}\exp\left\{\left(\alpha\mu+(1-\alpha)r-\frac{1}{2}\alpha^{2}\sigma^{2}\right)t+\alpha\sigma B_{t}\right\}.
\end{equation}
Hence, we can compute that
\begin{equation}
E[(V_{t})^{\theta}]=V_{0}^{\theta}e^{\theta(\alpha\mu+(1-\alpha)r-\frac{1}{2}\alpha^{2}\sigma^{2})t}e^{\frac{1}{2}\theta^{2}\alpha^{2}\sigma^{2}t}.
\end{equation}
Therefore, we are interested to maximize
\begin{equation}
\Lambda(\alpha)=\theta\left[\alpha\mu+(1-\alpha)r-\frac{1}{2}\alpha^{2}\sigma^{2}\right]+\frac{1}{2}\theta^{2}\alpha^{2}\sigma^{2},
\quad
0\leq\alpha\leq 1.
\end{equation}
It is easy to compute that
\begin{equation}
\Lambda'(\alpha)=(\theta\mu-\theta r)-(\theta-\theta^{2})\sigma^{2}\alpha.
\end{equation}
Hence, $\Lambda'(\alpha)=0$ if $\alpha=\frac{\mu-r}{(1-\theta)\sigma^{2}}$. Therefore, the optimal $\alpha^{\ast}$
is given by
\begin{equation}
\alpha^{\ast}=
\begin{cases}
0 &\text{if $\mu\leq r$},
\\
\frac{\mu-r}{(1-\theta)\sigma^{2}} &\text{if $0<\frac{\mu-r}{(1-\theta)\sigma^{2}}<1$},
\\
1 &\text{if $\mu-r\geq(1-\theta)\sigma^{2}$}.
\end{cases}
\end{equation}
The financial interpretation is clear. When $\mu\leq r$, it is optimal to invest in the bond only because the yield of the 
bond $r$ exceeds the mean return of the stock. When $\mu>r$, it is not always optimal in only invest in stocks. 
The reason is although the mean return of the stock exceeds the yield of the bond, stocks are volatile and a large volatility
can decrease the expected utility of the portfolio. This is consistent with the mean-variance analysis, which says
that given the mean, the investor has the incentive to minimize the variance.

In general, for any wealth process $V_{t}$, assume $\Lambda(\alpha)$ exists and is smooth and strictly concave,
If $\Lambda'(0)\leq 0$, then the optimal $\alpha^{\ast}$ is given by $\alpha^{\ast}=0$. Otherwise,
$\Lambda(\alpha)$ achieves a unique maximum 
at some $\alpha^{\dagger}\in(0,\infty)$.  
Then the optimal $\alpha^{\ast}$ is given by
\begin{equation}
\alpha^{\ast}
=
\begin{cases}
1 &\text{if $\alpha^{\dagger}\geq 1$},
\\
\alpha^{\dagger} &\text{if $\alpha^{\dagger}\in(0,1)$}.
\end{cases}
\end{equation}
This is the general method behind analyzing all the models in this article.

\section{Heston Model}\label{HSection}

Let us assume that the stock price follows a Heston model, namely, the stock price
has a stochastic volatility which follows a Cox-Ingersoll-Ross process,
\begin{equation}
\begin{cases}
dS_{t}=\mu S_{t}dt+\sqrt{\nu_{t}}S_{t}dB_{t},
\\
d\nu_{t}=\kappa(\gamma-\nu_{t})dt+\delta\sqrt{\nu_{t}}dW_{t},
\end{cases}
\end{equation}
where $W_{t}$ and $B_{t}$ are two standard Brownian motions and $\langle W,B\rangle_{t}=\rho t$, where
$-1\leq\rho\leq 1$ is the correlation. 
Assume that $\mu,\kappa,\gamma,\delta>0$.
The volatility process $\nu_{t}$ is a Cox-Ingersoll-Ross process, introduced by Cox et al. \cite{Cox}. 
We assume the Feller condition $2\kappa\gamma>\delta^{2}$ holds so that $\nu_{t}$ is always positive, see e.g. Feller \cite{Feller}.

The wealth process satisfies
\begin{equation}
\begin{cases}
dV_{t}=\alpha\mu V_{t}dt+\alpha\sqrt{\nu_{t}}V_{t}dB_{t}+(1-\alpha)rV_{t}dt,
\\
d\nu_{t}=\kappa(\gamma-\nu_{t})dt+\delta\sqrt{\nu_{t}}dW_{t}.
\end{cases}
\end{equation}
Then, we have
\begin{equation}
V_{t}=V_{0}\exp\left\{\alpha\mu t-\frac{1}{2}\alpha^{2}\int_{0}^{t}\nu_{s}ds+(1-\alpha)rt+\alpha\int_{0}^{t}\sqrt{\nu_{s}}dB_{s}\right\}.
\end{equation}
Hence, we get
\begin{equation}
E[u(V_{t})]=\frac{1}{\theta}V_{0}^{\theta}
e^{\theta(\alpha\mu+(1-\alpha)r)t}
E\left[e^{\theta(\alpha\int_{0}^{t}\sqrt{\nu_{s}}dB_{s}
-\frac{1}{2}\alpha^{2}\int_{0}^{t}\nu_{s}ds)}\right].
\end{equation}

\begin{lemma}\label{HLemma}
For any $\nu>0$,
\begin{align}
&\lim_{t\rightarrow\infty}\frac{1}{t}\log E_{\nu_{0}=\nu}
\left[e^{\theta(\alpha\int_{0}^{t}\sqrt{\nu_{s}}dB_{s}-\frac{1}{2}\alpha^{2}\int_{0}^{t}\nu_{s}ds)}\right]
\\
&=\frac{\kappa^{2}\gamma}{\delta^{2}}-\frac{\kappa\gamma}{\delta^{2}}
\sqrt{\kappa^{2}-\delta^{2}\theta^{2}\alpha^{2}(1-\rho^{2})+\delta^{2}\theta\alpha^{2}
-2\delta\kappa\alpha\rho\theta}
-\frac{\theta\alpha\rho\kappa}{\delta}\gamma
.\nonumber
\end{align}
\end{lemma}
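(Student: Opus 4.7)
The natural route is to integrate out the idiosyncratic part of $B$, then reduce everything to a Laplace transform of the integrated CIR process, which one computes by its affine (Riccati) structure.

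\textbf{Step 1: Decorrelation.} I would write $B_{t}=\rho W_{t}+\sqrt{1-\rho^{2}}Z_{t}$ with $Z$ a Brownian motion independent of $W$, so that
\begin{equation*}
\alpha\int_{0}^{t}\sqrt{\nu_{s}}\,dB_{s}=\alpha\rho\int_{0}^{t}\sqrt{\nu_{s}}\,dW_{s}+\alpha\sqrt{1-\rho^{2}}\int_{0}^{t}\sqrt{\nu_{s}}\,dZ_{s}.
\end{equation*}
Conditioning on $\mathcal{F}^{W}$, the $Z$-integral is Gaussian, producing a factor $\exp\{\tfrac{1}{2}\theta^{2}\alpha^{2}(1-\rho^{2})\int_{0}^{t}\nu_{s}\,ds\}$. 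The remaining $dW$-integral I would replace using the CIR SDE: $\int_{0}^{t}\sqrt{\nu_{s}}\,dW_{s}=\delta^{-1}(\nu_{t}-\nu_{0}-\kappa\gamma t+\kappa\int_{0}^{t}\nu_{s}\,ds)$. This turns the original expectation into
\begin{equation*}
e^{-\frac{\theta\alpha\rho}{\delta}\nu_{0}-\frac{\theta\alpha\rho\kappa\gamma}{\delta}t}\,E_{\nu}\!\left[\exp\!\left(c_{1}\nu_{t}+c_{2}\int_{0}^{t}\nu_{s}\,ds\right)\right],
\end{equation*}
where $c_{1}=\frac{\theta\alpha\rho}{\delta}$ and $c_{2}=\frac{\theta\alpha\rho\kappa}{\delta}+\frac{1}{2}\theta^{2}\alpha^{2}(1-\rho^{2})-\frac{1}{2}\theta\alpha^{2}$. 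The linear-in-$t$ prefactor already supplies the $-\frac{\theta\alpha\rho\kappa\gamma}{\delta}$ term in the claimed limit, and $c_{1}\nu_{0}$ is harmless upon dividing by $t$.

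\textbf{Step 2: Affine transform.} Because $\nu$ is affine, Feynman--Kac gives $E_{\nu}[e^{c_{1}\nu_{t}+c_{2}\int_{0}^{t}\nu_{s}ds}]=\exp(\psi(t)+\phi(t)\nu)$ with
\begin{equation*}
\phi'=\tfrac{1}{2}\delta^{2}\phi^{2}-\kappa\phi+c_{2},\qquad \psi'=\kappa\gamma\phi,\qquad \phi(0)=c_{1},\;\psi(0)=0,
\end{equation*}
derived by substituting the ansatz into the Kolmogorov backward equation and separating the $\nu$-affine coefficients. Provided $\kappa^{2}-2\delta^{2}c_{2}\ge 0$, the quadratic has two real roots, and $\phi_{-}:=\delta^{-2}(\kappa-\sqrt{\kappa^{2}-2\delta^{2}c_{2}})$ is the smaller one; a routine phase-line analysis of the Riccati ODE shows $\phi(t)\to\phi_{-}$ as $t\to\infty$ whenever $c_{1}<\phi_{+}$ (in particular for the $c_{1}$ and $c_{2}$ at hand, since the Feller condition controls $\delta/\kappa$ and $\theta\in(0,1)$). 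Then $\psi(t)/t\to\kappa\gamma\phi_{-}$, and combining with Step~1 yields $\frac{\kappa^{2}\gamma}{\delta^{2}}-\frac{\kappa\gamma}{\delta^{2}}\sqrt{\kappa^{2}-2\delta^{2}c_{2}}-\frac{\theta\alpha\rho\kappa\gamma}{\delta}$, which matches the stated expression after substituting $2\delta^{2}c_{2}=\delta^{2}\theta^{2}\alpha^{2}(1-\rho^{2})-\delta^{2}\theta\alpha^{2}+2\delta\kappa\alpha\rho\theta$.

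\textbf{Main obstacle.} The honest work is in Step~2: verifying that the Riccati flow actually converges to $\phi_{-}$ from the initial value $c_{1}$ in the parameter regime of interest, and that $\kappa^{2}-2\delta^{2}c_{2}\ge 0$ so the square root is real. One needs the inequality $\kappa^{2}\ge\delta^{2}\theta^{2}\alpha^{2}(1-\rho^{2})-\delta^{2}\theta\alpha^{2}+2\delta\kappa\alpha\rho\theta$, which I would argue using $\theta\in(0,1)$, $\alpha\in[0,1]$, and the Feller condition $2\kappa\gamma>\delta^{2}$ (in particular the sign analysis is easiest by rewriting the right-hand side as a concave quadratic in $\theta$). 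A secondary technical point is controlling the endpoint term $c_{1}\nu_{t}$: since $\nu_{t}$ is ergodic with exponentially integrable invariant law, $\frac{1}{t}\log E[e^{c_{1}\nu_{t}}]\to 0$, which ensures this contribution vanishes in the limit and does not corrupt the leading exponential rate.
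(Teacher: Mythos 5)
Your Steps 1--2 follow exactly the same route as the paper: decorrelate $B_{t}=\rho W_{t}+\sqrt{1-\rho^{2}}Z_{t}$, condition on the volatility filtration to integrate out $Z$, substitute $\int_{0}^{t}\sqrt{\nu_{s}}\,dW_{s}=\delta^{-1}(\nu_{t}-\nu_{0}-\kappa\gamma t+\kappa\int_{0}^{t}\nu_{s}ds)$, and then solve the resulting affine Feynman--Kac problem via a Riccati equation and a phase-line argument. The gap is precisely the two verifications you defer to the ``main obstacle,'' and the route you sketch for them would not go through as written. The discriminant inequality needs neither the Feller condition nor $\alpha\in[0,1]$, and the quadratic in $\theta$ you propose to study is convex, not concave (its leading coefficient is $\delta^{2}\alpha^{2}(1-\rho^{2})\geq 0$). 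The clean argument is to complete the square: $\kappa^{2}-2\delta^{2}c_{2}=(\kappa-\delta\theta\alpha\rho)^{2}+\delta^{2}\alpha^{2}(\theta-\theta^{2})>0$ for every $\alpha>0$, $\rho\in[-1,1]$ and $\theta\in(0,1)$ (and it equals $\kappa^{2}>0$ when $\alpha=0$). Likewise, your claim that $c_{1}<\phi_{+}$ holds ``since the Feller condition controls $\delta/\kappa$'' is not correct: $2\kappa\gamma>\delta^{2}$ places no bound on $\delta/\kappa$ (take $\gamma$ large). What settles it is to evaluate the Riccati quadratic $H(x)=\tfrac{1}{2}\delta^{2}x^{2}-\kappa x+c_{2}$ at $x=c_{1}=\theta\alpha\rho/\delta$ and note $H(c_{1})=\tfrac{1}{2}(\theta^{2}-\theta)\alpha^{2}<0$ for $\alpha>0$, so $c_{1}$ lies strictly between the two roots and the phase-line analysis forces $\phi(t)\to\phi_{-}$; for $\alpha=0$ one has $c_{1}=0<2\kappa/\delta^{2}=\phi_{+}$ directly. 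This is exactly how the paper closes the argument.

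Your closing ``secondary technical point'' about the endpoint term is also a red herring and, as stated, not a valid argument: $c_{1}\nu_{t}$ sits inside the joint expectation $E[\exp(c_{1}\nu_{t}+c_{2}\int_{0}^{t}\nu_{s}ds)]$ and cannot be peeled off as a separate factor $E[e^{c_{1}\nu_{t}}]$ whose rate you estimate by ergodicity. No such step is needed: the Feynman--Kac ansatz already incorporates the endpoint through the initial condition $\phi(0)=c_{1}$, and since $\phi(t)$ converges (hence stays bounded), the term $\phi(t)\nu/t$ vanishes in the limit. With the two verifications supplied as above, your outline coincides with the paper's proof.
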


\begin{proof}
Write $B_{t}=\rho W_{t}+\sqrt{1-\rho^{2}}Z_{t}$, where $Z_{t}$ is a standard Brownian motion independent of
$W_{t}$. Let $\mathcal{F}_{t}^{\nu}:=\sigma(\nu_{s},0\leq s\leq t)$ be the natural sigma field of the volatility process up to time $t$.
It is easy to compute that
\begin{align}
&E_{\nu_{0}=\nu}\left[e^{\theta(\alpha\int_{0}^{t}\sqrt{\nu_{s}}dB_{s}-\frac{1}{2}\alpha^{2}\int_{0}^{t}\nu_{s}ds)}\right]
\\
&=E_{\nu_{0}=\nu}\left[e^{\theta(\alpha\int_{0}^{t}\sqrt{\nu_{s}}\rho dW_{s}+\alpha\int_{0}^{t}\sqrt{\nu_{s}}\sqrt{1-\rho^{2}}dZ_{s}
-\frac{1}{2}\alpha^{2}\int_{0}^{t}\nu_{s}ds)}\right]\nonumber
\\
&=E_{\nu_{0}=\nu}\left[E\left[e^{\theta(\alpha\int_{0}^{t}\sqrt{\nu_{s}}\rho dW_{s}+\alpha\int_{0}^{t}\sqrt{\nu_{s}}\sqrt{1-\rho^{2}}dZ_{s}
-\frac{1}{2}\alpha^{2}\int_{0}^{t}\nu_{s}ds)}\bigg|\mathcal{F}_{t}^{\nu}\right]\right]\nonumber
\\
&=E_{\nu_{0}=\nu}\left[e^{\theta\alpha\int_{0}^{t}\sqrt{\nu_{s}}\rho dW_{s}+\frac{1}{2}\theta^{2}\alpha^{2}(1-\rho^{2})\int_{0}^{t}\nu_{s}ds
-\frac{1}{2}\theta\alpha^{2}\int_{0}^{t}\nu_{s}ds}\right]\nonumber
\\
&=E_{\nu_{0}=\nu}\left[e^{\frac{\theta\alpha\rho}{\delta}\nu_{t}+(\frac{\theta^{2}\alpha^{2}(1-\rho^{2})-\theta\alpha^{2}}{2}
+\frac{\kappa\alpha\rho}{\delta}\theta)\int_{0}^{t}\nu_{s}ds}\right]
e^{-\frac{\theta\alpha\rho}{\delta}\nu-\frac{\theta\alpha\rho\kappa}{\delta}\gamma t}
,\nonumber
\end{align}
where the last step was due to the fact that $\nu_{t}-\nu_{0}=\int_{0}^{t}\kappa(\gamma-\nu_{s})ds+\int_{0}^{t}\delta\sqrt{\nu_{s}}dW_{s}$.

Let $u(t,\nu):=E_{\nu_{0}=\nu}\left[e^{\frac{\theta\alpha\rho}{\delta}\nu_{t}+(\frac{\theta^{2}\alpha^{2}(1-\rho^{2})-\theta\alpha^{2}}{2}
+\frac{\kappa\alpha\rho}{\delta}\theta)\int_{0}^{t}\nu_{s}ds}\right]$. Feynman-Kac formula implies
that $u(t,\nu)$ satisfies the following partial differential equation,
\begin{equation}
\begin{cases}
\frac{\partial u}{\partial t}=\kappa(\gamma-\nu)\frac{\partial u}{\partial\nu}
+\frac{1}{2}\delta^{2}\nu\frac{\partial^{2}u}{\partial\nu^{2}}
+\left(\frac{\theta^{2}\alpha^{2}(1-\rho^{2})-\theta\alpha^{2}}{2}
+\frac{\kappa\alpha\rho}{\delta}\theta\right)\nu u,
\\
u(0,\nu)=e^{\frac{\theta\alpha\rho}{\delta}\nu}.
\end{cases}
\end{equation}
Let us try $u(t,\nu)=e^{A(t)+B(t)\nu}$ and it is easy to see that $A(t),B(t)$ satisfy the following system of ordinary differential equations,
\begin{equation}
\begin{cases}
A'(t)=\kappa\gamma B(t),
\\
B'(t)=-\kappa B(t)+\frac{1}{2}\delta^{2}B(t)^{2}+\left(\frac{\theta^{2}\alpha^{2}(1-\rho^{2})-\theta\alpha^{2}}{2}
+\frac{\kappa\alpha\rho}{\delta}\theta\right),
\\
A(0)=0,\quad B(0)=\frac{\theta\alpha\rho}{\delta}.
\end{cases}
\end{equation}

We claim that there are two distinct solutions to the quadratic equation
\begin{equation}\label{quadratic}
\frac{1}{2}\delta^{2}x^{2}-\kappa x+\left(\frac{\theta^{2}\alpha^{2}(1-\rho^{2})-\theta\alpha^{2}}{2}
+\frac{\kappa\alpha\rho}{\delta}\theta\right)=0,
\end{equation}
and $B(t)$ converges to the smaller solution of \eqref{quadratic}. 

We can compute that 
\begin{equation}
\Delta:=\kappa^{2}-2\delta^{2}\left(\frac{\theta^{2}\alpha^{2}(1-\rho^{2})-\theta\alpha^{2}}{2}
+\frac{\kappa\alpha\rho}{\delta}\theta\right).
\end{equation}
If $\alpha=0$, then $\Delta=\kappa^{2}>0$. If $\alpha\neq 0$, then,
\begin{align}
\Delta&=(\kappa^{2}+\delta^{2}\theta^{2}\alpha^{2}\rho^{2}-2\delta\kappa\alpha\rho\theta)+\delta^{2}\alpha^{2}(\theta-\theta^{2})
\\
&=(\kappa-\delta\theta\alpha\rho)^{2}+\delta^{2}\alpha^{2}(\theta-\theta^{2})>0,\nonumber
\end{align}
since $\theta\in(0,1)$. Hence \eqref{quadratic} has two distinct solutions.
$B'(t)$ is positive when $B(t)$ is smaller than the smaller solution of \eqref{quadratic} or larger
than the larger solution of $\eqref{quadratic}$. $B'(t)$ is negative if $B(t)$ lies between the two solutions
of \eqref{quadratic}. Therefore, $B(t)$ converges to the smaller solution of \eqref{quadratic} if
$B(0)=\frac{\theta\alpha\rho}{\delta}$
is less than the larger solution of \eqref{quadratic}. When $\alpha=0$, $B(0)=0$ and the large solution
of \eqref{quadratic} equals to $\frac{2\kappa}{\delta^{2}}>0$. Hence, we can assume that $\alpha>0$.
Let
\begin{equation}
H(x):=\frac{1}{2}\delta^{2}x^{2}-\kappa x+\left(\frac{\theta^{2}\alpha^{2}(1-\rho^{2})-\theta\alpha^{2}}{2}
+\frac{\kappa\alpha\rho}{\delta}\theta\right).
\end{equation}
It is easy to check that
\begin{equation}
H\left(\frac{\theta\alpha\rho}{\delta}\right)
=\frac{(\theta^{2}-\theta)\alpha^{2}}{2}<0,
\end{equation}
since $\theta\in(0,1)$ and $\alpha>0$. Therefore, we conclude that $B(0)$ is less than
the larger solution of \eqref{quadratic} and
\begin{equation}
B(t)\rightarrow\frac{\kappa}{\delta^{2}}-\frac{1}{\delta^{2}}\sqrt{\kappa^{2}-2\delta^{2}\left(\frac{\theta^{2}\alpha^{2}(1-\rho^{2})-\theta\alpha^{2}}{2}
+\frac{\kappa\alpha\rho}{\delta}\theta\right)},
\end{equation}
as $t\rightarrow\infty$ and hence
\begin{equation}
\frac{A(t)}{t}=\frac{1}{t}\kappa\gamma\int_{0}^{t}B(s)ds
\rightarrow\frac{\kappa^{2}\gamma}{\delta^{2}}-\frac{\kappa\gamma}{\delta^{2}}
\sqrt{\kappa^{2}-\delta^{2}\theta^{2}\alpha^{2}(1-\rho^{2})+\delta^{2}\theta\alpha^{2}
-2\delta\kappa\alpha\rho\theta},
\end{equation}
as $t\rightarrow\infty$. 
Recall that 
$E_{\nu_{0}=\nu}\left[e^{\theta(\alpha\int_{0}^{t}\sqrt{\nu_{s}}dB_{s}-\frac{1}{2}\alpha^{2}\int_{0}^{t}\nu_{s}ds)}\right]
=u(t,\nu)e^{-\frac{\theta\alpha\rho}{\delta}\nu-\frac{\theta\alpha\rho\kappa}{\delta}\gamma t}$.
Hence, we conclude that
\begin{align}
&\lim_{t\rightarrow\infty}\frac{1}{t}\log E_{\nu_{0}=\nu}
\left[e^{\theta(\alpha\int_{0}^{t}\sqrt{\nu_{s}}dB_{s}-\frac{1}{2}\alpha^{2}\int_{0}^{t}\nu_{s}ds)}\right]
\\
&=\frac{\kappa^{2}\gamma}{\delta^{2}}-\frac{\kappa\gamma}{\delta^{2}}
\sqrt{\kappa^{2}-\delta^{2}\theta^{2}\alpha^{2}(1-\rho^{2})+\delta^{2}\theta\alpha^{2}
-2\delta\kappa\alpha\rho\theta}
-\frac{\theta\alpha\rho\kappa}{\delta}\gamma
.\nonumber
\end{align}
\end{proof}

\begin{theorem}
\begin{align}
\Lambda(\alpha)&=\lim_{t\rightarrow\infty}\frac{1}{t}\log E[u(V_{t})]
\\
&=\frac{\kappa^{2}\gamma}{\delta^{2}}-\frac{\kappa\gamma}{\delta^{2}}
\sqrt{\kappa^{2}-\delta^{2}\theta^{2}\alpha^{2}(1-\rho^{2})+\delta^{2}\theta\alpha^{2}
-2\delta\kappa\alpha\rho\theta}
\nonumber
\\
&\qquad\qquad\qquad
-\frac{\theta\alpha\rho\kappa}{\delta}\gamma+\theta\alpha\mu+\theta(1-\alpha)r.
\nonumber
\end{align}
Let us define
\begin{equation}\label{DefnC}
\begin{cases}
C_{0}:=\frac{\kappa^{2}\gamma}{\delta^{2}}+\theta r,
\\
C_{1}:=\frac{\kappa^{2}\gamma^{2}}{\delta^{4}}(\delta^{2}\theta-\delta^{2}\theta^{2}(1-\rho^{2})),
\\
C_{2}:=\delta\kappa\rho\theta\cdot\frac{\kappa^{2}\gamma^{2}}{\delta^{4}},
\\
C_{3}:=\frac{\kappa^{4}\gamma^{2}}{\delta^{4}},
\\
C_{4}:=-\frac{\theta\rho\kappa}{\delta}\gamma+\theta(\mu-r).
\end{cases}
\end{equation}
When $C_{4}+\frac{C_{2}}{\sqrt{C_{3}}}\leq 0$, the optimal $\alpha^{\ast}=0$
and when $C_{4}\geq\sqrt{C_{1}}$, the optimal $\alpha^{\ast}=1$.
Finally, if $-\frac{C_{2}}{\sqrt{C_{3}}}<C_{4}<\sqrt{C_{1}}$,
\begin{equation}
\alpha^{\ast}=
\begin{cases}
\alpha^{\dagger} &\text{if $\alpha^{\dagger}<1$},
\\
1 &\text{otherwise},
\end{cases}
\end{equation}
where
\begin{equation}
\alpha^{\dagger}=\frac{1}{C_{1}}\left[C_{2}
+C_{4}\sqrt{\frac{C_{1}C_{3}-C_{2}^{2}}{C_{1}-C_{4}^{2}}}\right].
\end{equation}
\end{theorem}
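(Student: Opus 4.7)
The first assertion (the closed form for $\Lambda(\alpha)$) is immediate: take $\frac{1}{t}\log$ of the expression
$E[u(V_t)] = \frac{1}{\theta}V_0^{\theta} e^{\theta(\alpha\mu+(1-\alpha)r)t} E\bigl[e^{\theta(\alpha\int_0^t\sqrt{\nu_s}\,dB_s - \frac{1}{2}\alpha^2\int_0^t\nu_s\,ds)}\bigr]$
displayed above Lemma \ref{HLemma}, and substitute the limit proved there. The prefactors and the $\log\tfrac{1}{\theta}V_0^\theta$ term drop out.

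For the optimization, the plan is to first recast $\Lambda$ in the compact form
$$\Lambda(\alpha) \;=\; C_0 + C_4\alpha - \sqrt{C_1\alpha^2 - 2C_2\alpha + C_3},$$
which is a direct algebraic verification after multiplying $(\kappa\gamma/\delta^2)^2$ into the expression under the radical and using the definitions in \eqref{DefnC}. The structural point is that $\Lambda$ is concave on $[0,\infty)$. Writing $q(\alpha) := C_1\alpha^2 - 2C_2\alpha + C_3$, this reduces to convexity of $\sqrt{q}$, which follows from $C_1>0$ together with the identity
$$C_1 C_3 - C_2^2 \;=\; \frac{\kappa^6\gamma^4\,\theta(1-\theta)}{\delta^6} \;>\; 0,$$
an easy substitution using $\theta\in(0,1)$.

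Once concavity is available, the three cases fall out from the behavior of $\Lambda'$ at the two ends of $[0,\infty)$. A direct differentiation yields $\Lambda'(0) = C_4 + C_2/\sqrt{C_3}$ and $\lim_{\alpha\to\infty}\Lambda'(\alpha) = C_4 - \sqrt{C_1}$. By concavity, $\Lambda'$ is non-increasing, so: if $\Lambda'(0)\le 0$ then $\Lambda$ is non-increasing and $\alpha^*=0$; if $\lim_{\alpha\to\infty}\Lambda'(\alpha)\ge 0$ then $\Lambda'$ is non-negative on all of $[0,\infty)$ and $\alpha^*=1$; otherwise $-C_2/\sqrt{C_3}<C_4<\sqrt{C_1}$ and there is a unique interior zero $\alpha^\dagger\in(0,\infty)$ of $\Lambda'$, with $\alpha^* = \min(\alpha^\dagger,1)$. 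The inequality $C_1C_3>C_2^2$ also gives $-C_2/\sqrt{C_3}<\sqrt{C_1}$, so the three regimes indeed partition the parameter space.

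What remains, and what I expect to be the main algebraic chore, is to compute $\alpha^\dagger$ explicitly. Setting $\Lambda'(\alpha)=0$ yields $C_4\sqrt{q(\alpha)} = C_1\alpha - C_2$; squaring produces $(C_1-C_4^2)\bigl(C_1\alpha^2 - 2C_2\alpha\bigr) + C_2^2 - C_4^2 C_3 = 0$, and dividing through by $C_1(C_1-C_4^2)>0$ gives the two candidates
$$\alpha \;=\; \frac{C_2}{C_1} \pm \frac{|C_4|}{C_1}\sqrt{\frac{C_1C_3-C_2^2}{C_1-C_4^2}}.$$
The squaring step is reversible only when $C_1\alpha - C_2$ and $C_4$ share a sign; this sign constraint selects the root
$$\alpha^\dagger \;=\; \frac{1}{C_1}\Bigl[\,C_2 + C_4\sqrt{\tfrac{C_1C_3-C_2^2}{C_1-C_4^2}}\,\Bigr]$$
uniformly in the sign of $C_4$. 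Verifying this selection, and confirming that the chosen $\alpha^\dagger$ actually satisfies the original (unsquared) equation, is the one delicate algebraic point; everything else is essentially mechanical given concavity.
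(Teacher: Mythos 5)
Your proposal is correct and follows essentially the same route as the paper: the same compact form $\Lambda(\alpha)=C_{0}+C_{4}\alpha-\sqrt{C_{1}\alpha^{2}-2C_{2}\alpha+C_{3}}$, strict concavity via $C_{1}C_{3}-C_{2}^{2}=\frac{\kappa^{6}\gamma^{4}}{\delta^{6}}\theta(1-\theta)>0$, the endpoint analysis $\Lambda'(0)=C_{4}+C_{2}/\sqrt{C_{3}}$ and $\Lambda'(\alpha)\to C_{4}-\sqrt{C_{1}}$, and solving $\Lambda'(\alpha^{\dagger})=0$ by squaring with the sign constraint on $C_{1}\alpha^{\dagger}-C_{2}$. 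Your handling of the root selection uniformly in the sign of $C_{4}$ is in fact slightly more careful than the paper's (which asserts $C_{4}>0$ in that regime), and the remaining verification you flag is immediate since the chosen root satisfies $C_{1}\alpha^{\dagger}-C_{2}=C_{4}\sqrt{\tfrac{C_{1}C_{3}-C_{2}^{2}}{C_{1}-C_{4}^{2}}}$ by construction.
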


\begin{proof}
Recall that $E[u(V_{t})]=\frac{1}{\theta}V_{0}^{\theta}
e^{\theta(\alpha\mu+(1-\alpha)r)t}
E\left[e^{\theta(\alpha\int_{0}^{t}\sqrt{\nu_{s}}dB_{s}
-\frac{1}{2}\alpha^{2}\int_{0}^{t}\nu_{s}ds)}\right]$.
By Lemma \ref{HLemma}, we have
\begin{align}
\Lambda(\alpha)&=\lim_{t\rightarrow\infty}\frac{1}{t}\log E[u(V_{t})]
\\
&=\frac{\kappa^{2}\gamma}{\delta^{2}}-\frac{\kappa\gamma}{\delta^{2}}
\sqrt{\kappa^{2}-\delta^{2}\theta^{2}\alpha^{2}(1-\rho^{2})+\delta^{2}\theta\alpha^{2}-2\delta\kappa\alpha\rho\theta}
\nonumber
\\
&\qquad\qquad\qquad
-\frac{\theta\alpha\rho\kappa}{\delta}\gamma+\theta\alpha\mu+\theta(1-\alpha)r
\nonumber
\\
&=-\sqrt{C_{1}\alpha^{2}-2C_{2}\alpha+C_{3}}+C_{4}\alpha+C_{0}.\nonumber
\end{align}
From the definition in \eqref{DefnC},
it is clear that $C_{0},C_{1},C_{2},C_{3}>0$. But $C_{4}$ may or may not
be positive. 

It is easy to compute that
\begin{equation}
\Lambda'(\alpha)
=C_{4}-\frac{C_{1}\alpha-C_{2}}{\sqrt{C_{1}\alpha^{2}-2C_{2}\alpha+C_{3}}},
\end{equation}
and
\begin{equation}
\Lambda''(\alpha)
=-\frac{C_{1}\sqrt{C_{1}}(C_{1}C_{3}-C_{2}^{2})}
{((C_{1}\alpha-C_{2})^{2}+(C_{1}C_{3}-C_{2}^{2}))^{3/2}}.
\end{equation}
On the other hand, since $\theta\in(0,1)$, we have
\begin{align}
C_{1}C_{3}-C_{2}^{2}
&=\frac{\kappa^{2}\gamma^{2}}{\delta^{4}}\frac{\kappa^{4}\gamma^{2}}{\delta^{4}}
(\delta^{2}\theta-\delta^{2}\theta^{2}(1-\rho^{2}))
-\delta^{2}\kappa^{2}\rho^{2}\theta^{2}\frac{\kappa^{4}\gamma^{4}}{\delta^{8}}
\\
&=\frac{\kappa^{6}\gamma^{4}}{\delta^{8}}(\delta^{2}\theta-\delta^{2}\theta^{2}(1-\rho^{2})-\delta^{2}\rho^{2}\theta^{2})
\nonumber
\\
&=\frac{\kappa^{6}\gamma^{4}}{\delta^{6}}(\theta-\theta^{2})>0.\nonumber
\end{align}
Hence, we conclude that $\Lambda''(\alpha)<0$
for any $\alpha$, i.e. $\Lambda(\alpha)$ is strictly concave in $\alpha$.

Note that $\Lambda'(0)=C_{4}+\frac{C_{2}}{\sqrt{C_{3}}}$. If $C_{4}+\frac{C_{2}}{\sqrt{C_{3}}}\leq 0$, since
$\Lambda(\alpha)$ is strictly concave, the maximum must be achieved
at $\alpha^{\ast}=0$.
Now assume that $C_{4}+\frac{C_{2}}{\sqrt{C_{3}}}>0$.
When $C_{4}>\sqrt{C_{1}}$, it is easy
to check that $\Lambda'(\alpha)\sim(C_{4}-\sqrt{C_{1}})$ as $\alpha\rightarrow\infty$,
and since $\Lambda(\alpha)$ is strictly concave, it yields that 
$\Lambda(\alpha)$
is increasing in $\alpha\geq 0$ and the maximum is achieved
at $\alpha^{\dagger}=1$.
If $C_{4}=\sqrt{C_{1}}$,
\begin{align}
\Lambda'(\alpha)
&=\frac{C_{4}\sqrt{C_{1}\alpha^{2}-2C_{2}\alpha+C_{3}}-C_{1}\alpha+C_{2}}
{\sqrt{C_{1}\alpha^{2}-2C_{2}\alpha+C_{3}}}
\\
&=\frac{\frac{C_{4}}{C_{1}}
\sqrt{(C_{1}\alpha-C_{2})^{2}+(C_{3}C_{1}-C_{2}^{2})}-C_{1}\alpha+C_{2}}
{\sqrt{C_{1}\alpha^{2}-2C_{2}\alpha+C_{3}}}
\nonumber
\\
&=\frac{
\sqrt{(C_{1}\alpha-C_{2})^{2}+(C_{3}C_{1}-C_{2}^{2})}-(C_{1}\alpha-C_{2})}
{\sqrt{C_{1}\alpha^{2}-2C_{2}\alpha+C_{3}}}>0,
\nonumber
\end{align}
since $C_{3}C_{1}-C_{2}^{2}>0$. Thus, $\alpha^{\ast}=1$
when $C_{4}=\sqrt{C_{1}}$.

Now assume that $-\frac{C_{2}}{\sqrt{C_{3}}}<C_{4}<\sqrt{C_{1}}$. Then $\Lambda'(0)=C_{4}+\frac{C_{2}}{\sqrt{C_{3}}}>0$
and $\Lambda(\alpha)\rightarrow-\infty$ as $\alpha\rightarrow\infty$.
Thus, there exists a unique global maximum on $(0,\infty)$, given
by $\alpha^{\dagger}$. So that
\begin{equation}
\Lambda'(\alpha^{\dagger})=\sqrt{C_{1}}
\left[\frac{C_{4}}{\sqrt{C_{1}}}-\frac{(C_{1}\alpha^{\dagger}-C_{2})}
{\sqrt{(C_{1}\alpha^{\dagger}-C_{2})^{2}+C_{1}C_{3}-C_{2}^{2}}}\right]=0.
\end{equation}
$C_{1}\alpha^{\dagger}-C_{2}$ has the same sign as $C_{4}$ which is positive.
Hence, we can solve for $\alpha^{\dagger}$ and get
\begin{equation}
\alpha^{\dagger}=\frac{1}{C_{1}}\left[C_{2}
+C_{4}\sqrt{\frac{C_{1}C_{3}-C_{2}^{2}}{C_{1}-C_{4}^{2}}}\right].
\end{equation}
The optimal $\alpha^{\ast}$ is given by
\begin{equation}
\alpha^{\ast}=
\begin{cases}
\alpha^{\dagger} &\text{if $\alpha^{\dagger}<1$},
\\
1 &\text{otherwise}.
\end{cases}
\end{equation}
\end{proof}

\section{3/2 Model}\label{ThreeHalvesSection}

Let us assume that the stock price follows a 3/2 model, namely, 
\begin{equation}
\begin{cases}
dS_{t}=\mu S_{t}dt+\sqrt{\nu_{t}}S_{t}dB_{t},
\\
d\nu_{t}=\kappa\nu_{t}(\gamma-\nu_{t})+\delta\nu_{t}^{3/2}dW_{t},
\end{cases}
\end{equation}
where $B_{t}$ and $W_{t}$ are two standard Brownian motions, which
are assumed to be independent for simplicity.

Therefore, the wealth process satisfies
\begin{equation}
\begin{cases}
dV_{t}=\alpha\mu V_{t}dt+\alpha\sqrt{\nu_{t}}V_{t}dB_{t}+(1-\alpha)rV_{t}dt,
\\
d\nu_{t}=\kappa(\gamma-\nu_{t})dt+\delta\nu_{t}^{3/2}dW_{t},
\end{cases}
\end{equation}
Then, we have
\begin{equation}
V_{t}=V_{0}\exp
\left\{\alpha\mu t-\frac{1}{2}\alpha^{2}\int_{0}^{t}\nu_{s}ds
+(1-\alpha)rt+\alpha\int_{0}^{t}\sqrt{\nu_{s}}dB_{s}\right\}.
\end{equation}
Hence, we get
\begin{align}
E[u(V_{t})]&=\frac{1}{\theta}V_{0}^{\theta}
e^{\theta(\alpha\mu+(1-\alpha)r)t}
E\left[e^{\theta(\alpha\int_{0}^{t}\sqrt{\nu_{s}}dB_{s}
-\frac{1}{2}\alpha^{2}\int_{0}^{t}\nu_{s}ds)}\right]
\\
&=\frac{1}{\theta}V_{0}^{\theta}
e^{\theta(\alpha\mu+(1-\alpha)r)t}
E\left[e^{-\frac{1}{2}\alpha^{2}(\theta-\theta^{2})\int_{0}^{t}\nu_{s}ds}\right].
\nonumber
\end{align}
The volatility process $\nu_{t}$ is not an affine process but it is still analytically tractable.
The Laplace tranform of $\int_{0}^{t}\nu_{s}ds$ is known, see e.g. 
Lewis \cite{Lewis}. 
\begin{equation}
E_{\nu_{0}}\left[e^{-\lambda\int_{0}^{t}\nu_{s}ds}\right]
=\frac{\Gamma(b-a)}{\Gamma(b)}
\left(\frac{2\kappa\gamma}{\delta^{2}\nu_{0}(e^{\kappa\gamma t}-1)}\right)^{a}
M\left(a,b,-\frac{2\kappa\gamma}{\delta^{2}\nu_{0}(e^{\kappa\gamma t}-1)}\right),
\end{equation}
where
\begin{equation}
\begin{cases}
a:=-\left(\frac{1}{2}+\frac{\kappa}{\delta^{2}}\right)
+\sqrt{\left(\frac{1}{2}+\frac{\kappa}{\delta^{2}}\right)^{2}
+\frac{2\lambda}{\delta^{2}}},
\\
b:=1+2\sqrt{\left(\frac{1}{2}+\frac{\kappa}{\delta^{2}}\right)^{2}
+\frac{2\lambda}{\delta^{2}}},
\end{cases}
\end{equation}
and $\Gamma(\cdot)$ is the standard Gamma function and
$M(a,b,z):=\sum_{n=0}^{\infty}\frac{(a)_{n}}{(b)_{n}}\frac{z^{n}}{n!}$
is the confluent hypergeometric function, also known 
as Kummer's function (see e.g. Abramowitz and Stegun \cite{Abramowitz}), where $(c)_{0}:=1$
and $(c)_{n}:=c(c+1)\cdots(c+n-1)$ for $n\geq 1$.
In our case,
\begin{equation}
\lambda:=\frac{1}{2}\alpha^{2}(\theta-\theta^{2})>0,
\end{equation}
since $\theta\in(0,1)$. We are interested in the asymptotic behavior
of the Laplace transform as $t\rightarrow\infty$.
As $t\rightarrow\infty$, 
$-\frac{2\kappa\gamma}{\delta^{2}\nu_{0}(e^{\kappa\gamma t}-1)}
\rightarrow 0$ and $M(a,b,-\frac{2\kappa\gamma}{\delta^{2}\nu_{0}(e^{\kappa\gamma t}-1)})
\rightarrow 1$. Then, it is easy to see that
\begin{align}
\lim_{t\rightarrow\infty}\frac{1}{t}\log
E\left[e^{-\frac{1}{2}\alpha^{2}(\theta-\theta^{2})\int_{0}^{t}\nu_{s}ds}\right]
&=-a\kappa\gamma
\\
&=\kappa\gamma\left(\frac{1}{2}+\frac{\kappa}{\delta^{2}}\right)
-\kappa\gamma\sqrt{\left(\frac{1}{2}+\frac{\kappa}{\delta^{2}}\right)
+\frac{\alpha^{2}(\theta-\theta^{2})}{\delta^{2}}}.
\nonumber
\end{align}
Hence, we conclude that
\begin{align}
\Lambda(\alpha)&=\lim_{t\rightarrow\infty}\frac{1}{t}\log E[u(V_{t})]
\\
&=\theta\alpha\mu+\theta(1-\alpha)r+\kappa\gamma\left(\frac{1}{2}+\frac{\kappa}{\delta^{2}}\right)
-\kappa\gamma\sqrt{\left(\frac{1}{2}+\frac{\kappa}{\delta^{2}}\right)
+\frac{\alpha^{2}(\theta-\theta^{2})}{\delta^{2}}}.
\nonumber
\end{align}
It is straightforward to check that $\Lambda''(\alpha)<0$
and it is easy to compute that
\begin{equation}\label{twosolutions}
\Lambda'(\alpha)=\theta(\mu-r)-\kappa\gamma\frac{(\theta-\theta^{2})}{\delta^{2}}
\frac{\alpha}{\sqrt{\left(\frac{1}{2}+\frac{\kappa}{\delta^{2}}\right)
+\frac{\alpha^{2}(\theta-\theta^{2})}{\delta^{2}}}}.
\end{equation}
When $\mu-r\leq 0$, since $\Lambda(\alpha)$ is strictly concave,
$\Lambda(\alpha)$ is decreasing for $\alpha\geq 0$
and thus the optimal $\alpha^{\ast}$ is achieved at $\alpha^{\ast}=0$.
Now, assume that $\mu-r>0$.
When $\theta(\mu-r)-\frac{\kappa\gamma}{\delta}\sqrt{\theta-\theta^{2}}\geq 0$,
$\Lambda'(\alpha)\geq 0$ for any $\alpha$ and the optimal $\alpha^{\ast}$
is achieved at $\alpha^{\ast}=1$.
When $\theta(\mu-r)-\frac{\kappa\gamma}{\delta}\sqrt{\theta-\theta^{2}}
<0$,
there exists a unique global maximum of $\Lambda(\alpha)$
achieved at $\alpha^{\dagger}\in(0,\infty)$ 
so that $\Lambda'(\alpha^{\dagger})=0$.
Observe that if $\Lambda'(\alpha)=0$ in \eqref{twosolutions}, then $\alpha$ has the same
sign as $\mu-r>0$.
After some algebraic manipulations, we get
\begin{equation}
\alpha^{\dagger}
=\frac{\theta(\mu-r)\delta^{2}}{\sqrt{\kappa^{2}\gamma^{2}(\theta-\theta^{2})
-\theta^{2}(\mu-r)^{2}\delta^{2}}\sqrt{\theta-\theta^{2}}}
\sqrt{\frac{1}{2}+\frac{\kappa}{\delta^{2}}}.
\end{equation}
Thus, $\alpha^{\ast}=\alpha^{\dagger}$ if $\alpha^{\dagger}<1$
and $\alpha^{\ast}=1$ otherwise.

We summarise our results in the following theorem.
\begin{theorem}
\begin{align}
\Lambda(\alpha)&=\lim_{t\rightarrow\infty}\frac{1}{t}\log E[u(V_{t})]
\\
&=\theta\alpha\mu+\theta(1-\alpha)r+\kappa\gamma\left(\frac{1}{2}+\frac{\kappa}{\delta^{2}}\right)
-\kappa\gamma\sqrt{\left(\frac{1}{2}+\frac{\kappa}{\delta^{2}}\right)
+\frac{\alpha^{2}(\theta-\theta^{2})}{\delta^{2}}}.
\nonumber
\end{align}
The optimal $\alpha^{\ast}$ is given by $\alpha^{\ast}=0$
if $\mu-r\leq 0$ and $\alpha^{\ast}=1$ 
if $\theta(\mu-r)-\frac{\kappa\gamma}{\delta}\sqrt{\theta-\theta^{2}}\geq 0$ 
and if $\mu>r$ 
and $\theta(\mu-r)-\frac{\kappa\gamma}{\delta}\sqrt{\theta-\theta^{2}}<0$,
the optimal $\alpha^{\ast}$ is given by
\begin{equation}
\alpha^{\ast}=
\begin{cases}
\alpha^{\dagger} &\text{if $\alpha^{\dagger}<1$},
\\
1 &\text{otherwise},
\end{cases}
\end{equation}
where
\begin{equation}
\alpha^{\dagger}
=\frac{\theta(\mu-r)\delta^{2}}{\sqrt{\kappa^{2}\gamma^{2}(\theta-\theta^{2})
-\theta^{2}(\mu-r)^{2}\delta^{2}}\sqrt{\theta-\theta^{2}}}
\sqrt{\frac{1}{2}+\frac{\kappa}{\delta^{2}}}.
\end{equation}
\end{theorem}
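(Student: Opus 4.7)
My plan is to reduce the computation of $\Lambda(\alpha)$ to the Laplace transform of the integrated volatility, use the known closed-form from Lewis, and then handle the optimization by a concavity argument analogous to the Heston case.

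First I would exploit the fact that $B_{t}$ and $W_{t}$ are independent in the 3/2 model. Conditioning on $\mathcal{F}_{t}^{\nu}$, the stochastic integral $\int_{0}^{t}\sqrt{\nu_{s}}\,dB_{s}$ is a centered Gaussian with variance $\int_{0}^{t}\nu_{s}\,ds$, so its conditional MGF at $\theta\alpha$ yields a factor $e^{\frac{1}{2}\theta^{2}\alpha^{2}\int_{0}^{t}\nu_{s}ds}$. Combining with the $-\frac{1}{2}\theta\alpha^{2}\int_{0}^{t}\nu_{s}ds$ term cancels into $-\frac{1}{2}\alpha^{2}(\theta-\theta^{2})\int_{0}^{t}\nu_{s}ds$, which is exactly the exponent appearing in the Laplace transform of the integrated 3/2 process at $\lambda=\frac{1}{2}\alpha^{2}(\theta-\theta^{2})>0$. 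This bookkeeping is essentially the same reduction that works for Heston but simpler since $\rho=0$.

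Next I would plug this $\lambda$ into Lewis's formula. The asymptotics as $t\to\infty$ rest on three elementary facts: the argument $z_{t}:=-\frac{2\kappa\gamma}{\delta^{2}\nu_{0}(e^{\kappa\gamma t}-1)}$ of the Kummer function tends to $0$, so $M(a,b,z_{t})\to M(a,b,0)=1$; the Gamma ratio $\Gamma(b-a)/\Gamma(b)$ does not depend on $t$; and the prefactor $(-z_{t})^{a}\sim \text{const}\cdot e^{-a\kappa\gamma t}$. Taking $\frac{1}{t}\log$ and passing to the limit leaves only the rate $-a\kappa\gamma$. Substituting the expression for $a$ and adding back the deterministic drift $\theta(\alpha\mu+(1-\alpha)r)$ produces the stated formula for $\Lambda(\alpha)$.

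For the optimization, I would verify strict concavity directly from $\Lambda''(\alpha)<0$; the easiest route is to observe that the square-root term is a concave function of $\alpha^{2}$ composed with $\alpha\mapsto \alpha^{2}$ in a way that still yields concavity because the derivative is monotonically decreasing (equivalently, one can differentiate and check the sign). Given strict concavity, the argument of the toy example in the introduction applies: if $\Lambda'(0)\leq 0$, which for this model reduces to $\mu\leq r$ since the square-root derivative vanishes at $\alpha=0$, then $\alpha^{\ast}=0$; if instead $\Lambda'(\alpha)$ stays positive for all $\alpha\geq 0$, i.e. its limit at $\infty$ equals $\theta(\mu-r)-\frac{\kappa\gamma}{\delta}\sqrt{\theta-\theta^{2}}\geq 0$, then $\alpha^{\ast}=1$; otherwise there is a unique interior critical point $\alpha^{\dagger}>0$. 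To solve $\Lambda'(\alpha^{\dagger})=0$ I would isolate the square root in \eqref{twosolutions}, square both sides, and solve the resulting linear equation in $\alpha^{2}$, then pick the positive root since $\mu-r>0$ forces $\alpha^{\dagger}$ and $\mu-r$ to share sign. Truncation to $[0,1]$ completes the case analysis.

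The only nontrivial step is the asymptotic analysis of Lewis's formula: one must check that the Kummer function and Gamma ratio contribute no hidden subexponential growth that would perturb the $\frac{1}{t}\log$ limit, and that $a$ is real (which holds because $\lambda>0$ makes the discriminant positive). Everything else is straightforward calculus on a strictly concave function.
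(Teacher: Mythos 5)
Your proposal is correct and follows essentially the same route as the paper: conditioning on the volatility path to reduce the problem to the Laplace transform of $\int_{0}^{t}\nu_{s}ds$ at $\lambda=\frac{1}{2}\alpha^{2}(\theta-\theta^{2})$, extracting the exponential rate $-a\kappa\gamma$ from Lewis's closed form (with $M\to 1$ and the Gamma ratio constant in $t$), and then optimizing the strictly concave $\Lambda(\alpha)$ by the same three-case analysis ($\Lambda'(0)\leq 0$, $\lim_{\alpha\to\infty}\Lambda'(\alpha)\geq 0$, and the interior root obtained by isolating and squaring the square-root term with the sign of $\alpha^{\dagger}$ fixed by $\mu-r>0$). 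No gaps worth flagging.
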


\begin{remark}
For simplicity, we only considered the case when $\rho=0$. Indeed, for general $-1\leq\rho\leq 1$, the joint
Fourier-Laplace transform of the logarithm of the spot price, i.e. $\log(S_{t}/S_{0})$ and the total integrated
variance, i.e. $\int_{0}^{t}\nu_{s}ds$ is also known in the close-form, see e.g. Carr and Sun \cite{Carr} and our methods
can still be applied to obtain the optimal strategy $\alpha^{\ast}$. But the computations would be more involved.
\end{remark}

\section{Jump Diffusion Model}\label{JSection}

Let us assume that the stock price follows a jump diffusion model. 
More precisely,
\begin{equation}
dS_{t}=\mu S_{t-}dt+\sigma S_{t-}dB_{t}+S_{t-}dJ_{t},
\end{equation}
where $J_{t}=\sum_{i=1}^{N_{t}}(Y_{i}-1)$, where $Y_{i}$ are i.i.d. random variables distributed on $(0,\infty)$ with a smooth and bounded
probability density function
and $Y_{i}$ are independent of $N_{t}$ which is a standard Poisson process with intensity $\lambda>0$.
We further assume that $E[Y_{1}]<\infty$.

The wealth process satisfies
\begin{equation}
dV_{t}=\alpha\mu V_{t-}dt+\alpha\sigma V_{t-}dB_{t}+\alpha V_{t-}dJ_{t}+(1-\alpha)rV_{t-}dt.
\end{equation}
Then, we have
\begin{equation}
V_{t}=V_{0}e^{\alpha\mu t+(1-\alpha)rt+\alpha\sigma B_{t}-\frac{1}{2}\alpha^{2}\sigma^{2}t+\sum_{i=1}^{N_{t}}\log(\alpha(Y_{i}-1)+1)}.
\end{equation}
Therefore,
\begin{equation}
\Lambda(\alpha)=\theta[\alpha\mu+(1-\alpha)r]+\frac{1}{2}(\theta^{2}-\theta)\alpha^{2}\sigma^{2}
+\lambda(E[(\alpha(Y_{1}-1)+1)^{\theta}]-1).
\end{equation}

\begin{remark}
(i) If $Y_{1}\equiv y$ is a positive constant, then
\begin{equation}
\Lambda(\alpha)=\theta[\alpha\mu+(1-\alpha)r]
+\frac{1}{2}(\theta^{2}-\theta)\alpha^{2}\sigma^{2}
+\lambda((\alpha(y-1)+1)^{\theta}-1).
\end{equation}

(ii) If $Y_{1}$ is exponentially distributed with parameter $\rho>0$, then
\begin{align}
E[(\alpha(Y_{1}-1)+1)^{\theta}]
&=\int_{0}^{\infty}(\alpha y+1-\alpha)^{\theta}e^{-\rho y}\rho dy
\\
&=e^{\rho(\frac{1}{\alpha}-1)}\left(\alpha/\rho\right)^{\theta}
\Gamma\left(\theta+1,\rho\left(\frac{1}{\alpha}-1\right)\right),\nonumber
\end{align}
where $\Gamma(s,x):=\int_{x}^{\infty}t^{s-1}e^{-t}dt$ is an upper
incomplete Gamma function.
\end{remark}

It is easy to compute that
\begin{equation}
\Lambda'(\alpha)
=\theta(\mu-r)+(\theta^{2}-\theta)\sigma^{2}\alpha+\lambda\theta E\left[(\alpha(Y_{1}-1)+1)^{\theta-1}(Y_{1}-1)\right],
\end{equation}
and
\begin{equation}
\Lambda''(\alpha)
=(\theta^{2}-\theta)\sigma^{2}+\lambda\theta(\theta-1)E\left[(\alpha(Y_{1}-1)+1)^{\theta-2}(Y_{1}-1)^{2}\right]<0,
\end{equation}
since $\theta\in(0,1)$ and $\alpha\in[0,1]$. 

In the expression of $\Lambda(\alpha)$, 
\begin{align}
|E[(\alpha(Y_{1}-1)+1)^{\theta}]|
&\leq E[(\alpha|Y_{1}-1|+1)^{\theta}]
\\
&\leq E[(\alpha(Y_{1}+1)+1)^{\theta}]\nonumber
\\
&\leq E[(\alpha(Y_{1}+1)+1)]\nonumber
\\
&=\alpha(E[Y_{1}]+1)+1,\nonumber
\end{align}
since $\theta\in(0,1)$ and $\alpha(Y_{1}+1)+1\geq 1$ a.s. The coefficient
of $\alpha^{2}$ term in $\Lambda(\alpha)$ is $\frac{1}{2}(\theta^{2}-\theta)\sigma^{2}$
which is negative. Thus, $\Lambda(\alpha)\rightarrow-\infty$
as $\alpha\rightarrow\infty$. Recall that $\Lambda(\alpha)$
is strictly concave. 
Therefore, if $\Lambda'(0)=\theta(\mu-r)+\lambda\theta(E[Y_{1}]-1)\leq 0$,
then, the optimal $\alpha^{\ast}$ is achieved at $\alpha^{\ast}=0$.
If $\Lambda'(0)=\theta(\mu-r)+\lambda\theta(E[Y_{1}]-1)>0$,
then, there exists a unique $\alpha^{\dagger}\in(0,\infty)$
so that $\Lambda'(\alpha^{\dagger})=0$. 
In this case, the optimal $\alpha^{\ast}$ is given by
\begin{equation}
\alpha^{\ast}
=
\begin{cases}
1 &\text{if $\alpha^{\dagger}\geq 1$},
\\
\alpha^{\dagger} &\text{if $\alpha^{\dagger}\in(0,1)$}.
\end{cases}
\end{equation}

We summarize our conclusions in the following theorem.
\begin{theorem}
\begin{equation}
\Lambda(\alpha)=\theta[\alpha\mu+(1-\alpha)r]+\frac{1}{2}(\theta^{2}-\theta)\alpha^{2}\sigma^{2}
+\lambda(E[(\alpha(Y_{1}-1)+1)^{\theta}]-1).
\end{equation}
When $\theta(\mu-r)+\lambda\theta(E[Y_{1}]-1)\leq 0$,
$\alpha^{\ast}=0$. Otherwise,
\begin{equation}
\alpha^{\ast}
=
\begin{cases}
1 &\text{if $\alpha^{\dagger}\geq 1$},
\\
\alpha^{\dagger} &\text{if $\alpha^{\dagger}\in(0,1)$},
\end{cases}
\end{equation}
where $\alpha^{\dagger}$ is the unique positive solution to
\begin{equation}
\theta(\mu-r)+(\theta^{2}-\theta)\sigma^{2}\alpha
+\lambda\theta E\left[(\alpha(Y_{1}-1)+1)^{\theta-1}(Y_{1}-1)\right]=0.
\end{equation}
\end{theorem}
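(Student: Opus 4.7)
The plan is to derive the closed form for $\Lambda(\alpha)$ by direct computation and then apply the concavity-based recipe described in the introduction. Starting from the jump SDE, I would apply It\^{o}'s formula to $\log V_t$ to obtain the explicit solution
$V_t = V_0 \exp\bigl(\alpha\mu t + (1-\alpha)rt + \alpha\sigma B_t - \tfrac12\alpha^2\sigma^2 t\bigr)\prod_{i=1}^{N_t}(\alpha(Y_i-1)+1)$.
Since $Y_i>0$ and $\alpha\in[0,1]$, the factor $\alpha(Y_i-1)+1 = \alpha Y_i + (1-\alpha)$ is strictly positive, so $V_t$ stays positive and raising to the $\theta$-th power is well-defined.

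Taking the $\theta$-th power and using mutual independence of $B$, $N$, and the $Y_i$'s, the expectation factors into a Brownian piece contributing $e^{\frac12\theta^2\alpha^2\sigma^2 t}$ and a jump piece. For the jump piece I would condition on $N_t$ and use the i.i.d.\ structure of the $Y_i$'s to sum a Poisson series, yielding
$E\bigl[\prod_{i=1}^{N_t}(\alpha(Y_i-1)+1)^\theta\bigr] = \exp\bigl(\lambda t\bigl(E[(\alpha(Y_1-1)+1)^\theta] - 1\bigr)\bigr)$.
Finiteness of the inner expectation follows from the concavity of $x\mapsto x^\theta$ together with $E[Y_1]<\infty$. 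Dividing by $t$, adding the deterministic drift, and letting $t\to\infty$ produces the claimed formula for $\Lambda(\alpha)$.

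For the optimization step, the calculation already displayed in the excerpt gives strict concavity: $\Lambda''(\alpha)<0$ because both the Brownian contribution $(\theta^2-\theta)\sigma^2$ and the jump contribution $\lambda\theta(\theta-1)E[(\alpha(Y_1-1)+1)^{\theta-2}(Y_1-1)^2]$ are nonpositive with the former strictly so. The linear bound $E[(\alpha(Y_1-1)+1)^\theta]\leq \alpha(E[Y_1]+1)+1$ derived in the excerpt then forces $\Lambda(\alpha)\to-\infty$ as $\alpha\to\infty$, since the $-\tfrac12(\theta-\theta^2)\alpha^2\sigma^2$ term dominates the at-most-linear jump contribution. Combining these two facts with the general method stated in the introduction gives the trichotomy: if $\Lambda'(0)=\theta(\mu-r)+\lambda\theta(E[Y_1]-1)\leq 0$ then strict concavity forces the maximum on $[0,\infty)$ at $0$, so $\alpha^\ast=0$; otherwise $\Lambda'$ has a unique positive root $\alpha^\dagger$, and the constrained optimum on $[0,1]$ is $\alpha^\dagger$ when $\alpha^\dagger<1$ and $1$ otherwise.

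The only subtle point is justifying differentiation under the expectation sign to obtain $\Lambda'$ and $\Lambda''$, since only $E[Y_1]$ is assumed finite. For $\Lambda'(\alpha)$ the integrand $(\alpha(Y_1-1)+1)^{\theta-1}(Y_1-1)$ behaves like $\alpha^{\theta-1}Y_1^\theta$ for large $Y_1$, which is integrable because $\theta\in(0,1)$ gives $E[Y_1^\theta]\leq (E[Y_1])^\theta<\infty$ by Jensen; near $Y_1=0$ the factor $\theta-1>-1$ keeps the density-weighted integrand locally integrable (the paper assumes a bounded density). For $\Lambda''(\alpha)$ with $\alpha>0$, an analogous bound $C_\alpha Y_1^\theta$ applies. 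This is the only genuinely technical step; everything else is a direct instance of the Black--Scholes toy-model argument from the introduction, which is why the author states the result as a summary rather than re-deriving it.
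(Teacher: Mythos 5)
Your proposal is correct and follows essentially the same route as the paper: derive the explicit solution of the jump SDE, factor the expectation using independence and the compound-Poisson exponential formula to get the stated $\Lambda(\alpha)$, then use strict concavity of $\Lambda$, the linear bound forcing $\Lambda(\alpha)\to-\infty$, and the sign of $\Lambda'(0)=\theta(\mu-r)+\lambda\theta(E[Y_{1}]-1)$ to obtain the trichotomy for $\alpha^{\ast}$. The only difference is that you spell out the Poisson conditioning and the justification for differentiating under the expectation, steps the paper leaves implicit.
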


\section{Black-Scholes Model with Vasicek Interest Rate}\label{VSection}

Let us assume that the stock price follows a Black-Scholes model 
with constant drift $\mu$ and volatility $\sigma$ and the interest rate $r_{t}$ follows a Vasicek model.
The Vasicek model is a standard interest rate model, introduced by Vasicek \cite{Vasicek}.
The wealth process satisfies the following stochastic differential equation.
\begin{equation}
dV_{t}=\alpha\mu V_{t}dt+\alpha\sigma V_{t}dB_{t}+(1-\alpha)r_{t}V_{t}dt,
\end{equation}
where $B_{t}$ is a standard Brownian motion starting at $0$ at time $0$ and
\begin{equation}
dr_{t}=\kappa(\gamma-r_{t})dt+\delta dW_{t},
\end{equation}
where $W_{t}$ is a standard Brownian motion so that $\langle W,B\rangle_{t}=\rho t$, where $-1\leq\rho\leq 1$ is the correlation.

Therefore, the wealth process is given by
\begin{equation}
V_{t}=V_{0}e^{\alpha\mu t+\alpha\sigma B_{t}-\frac{1}{2}\alpha^{2}\sigma^{2}t+(1-\alpha)\int_{0}^{t}r_{s}ds}.
\end{equation}

\begin{lemma}\label{VLemma}
For any $r_{0}=r>0$,
\begin{align}
&\lim_{t\rightarrow\infty}\frac{1}{t}\log
E_{r_{0}=r}
\left[e^{\theta(\alpha\sigma B_{t}+(1-\alpha)\int_{0}^{t}r_{s}ds)}\right]
\\
&=\kappa\gamma\left(\frac{\theta(1-\alpha)}{\kappa}
+\frac{\theta\alpha\sigma\rho}{\delta}\right)
+\frac{\delta^{2}}{2}\left(\frac{\theta(1-\alpha)}{\kappa}
+\frac{\theta\alpha\sigma\rho}{\delta}\right)^{2}
\nonumber
\\
&\qquad\qquad\qquad
-\frac{\theta\alpha\sigma\kappa\gamma\rho}{\delta}
+\frac{1}{2}\theta^{2}\alpha^{2}\sigma^{2}(1-\rho^{2})
.\nonumber
\end{align}
\end{lemma}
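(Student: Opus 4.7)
The plan is to mimic the structure of Lemma \ref{HLemma}, exploiting the fact that the Vasicek process (unlike CIR) is Gaussian and hence even more tractable. First I would decompose $B_{t}=\rho W_{t}+\sqrt{1-\rho^{2}}Z_{t}$ with $Z$ a standard Brownian motion independent of $W$, and condition on $\mathcal{F}_{t}^{W}$. Since $r_{s}$ is $\mathcal{F}_{t}^{W}$-measurable for $s\leq t$, only the $Z_{t}$ term survives conditioning, and the Gaussian moment generating function contributes a clean factor $e^{\frac{1}{2}\theta^{2}\alpha^{2}\sigma^{2}(1-\rho^{2})t}$. This isolates the $(1-\rho^2)$-term in the target expression and reduces the problem to computing the exponential rate of $E_{r_{0}=r}[e^{\theta\alpha\sigma\rho W_{t}+\theta(1-\alpha)\int_{0}^{t}r_{s}ds}]$.

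Next, I would use the Vasicek SDE to eliminate $W_{t}$ in favor of state variables: integrating $dr_{s}=\kappa(\gamma-r_{s})ds+\delta dW_{s}$ gives
\begin{equation*}
W_{t}=\frac{1}{\delta}\left(r_{t}-r_{0}-\kappa\gamma t+\kappa\int_{0}^{t}r_{s}ds\right),
\end{equation*}
exactly as in the Heston proof. Substituting produces a representation of the form $u(t,r)\cdot e^{-\frac{\theta\alpha\sigma\rho}{\delta}r-\frac{\theta\alpha\sigma\rho\kappa\gamma}{\delta}t}$, where
\begin{equation*}
u(t,r):=E_{r_{0}=r}\bigl[e^{\eta r_{t}+\xi\int_{0}^{t}r_{s}ds}\bigr],\qquad \eta:=\tfrac{\theta\alpha\sigma\rho}{\delta},\quad \xi:=\tfrac{\theta\alpha\sigma\rho\kappa}{\delta}+\theta(1-\alpha).
\end{equation*}
The deterministic $t$-term already accounts for the $-\tfrac{\theta\alpha\sigma\kappa\gamma\rho}{\delta}$ summand in the stated answer.

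For $u(t,r)$ I would apply Feynman--Kac to write a linear parabolic PDE and try the affine ansatz $u(t,r)=e^{A(t)+B(t)r}$. Matching coefficients gives
\begin{equation*}
B'(t)=-\kappa B(t)+\xi,\qquad A'(t)=\kappa\gamma B(t)+\tfrac{1}{2}\delta^{2}B(t)^{2},
\end{equation*}
with $B(0)=\eta$, $A(0)=0$. Because the Vasicek model is Gaussian the $B$-equation is \emph{linear} (no quadratic term, unlike Heston), so it is explicitly solvable and $B(t)\to\xi/\kappa$ exponentially fast. Consequently
\begin{equation*}
\lim_{t\to\infty}\frac{A(t)}{t}=\kappa\gamma\cdot\frac{\xi}{\kappa}+\frac{\delta^{2}}{2}\left(\frac{\xi}{\kappa}\right)^{2},
\end{equation*}
and the boundary term $B(t)r$ stays bounded so contributes $0$ to the rate.

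Finally I would sum all three contributions: the $Z$-conditioning term, the deterministic shift $-\tfrac{\theta\alpha\sigma\rho\kappa\gamma}{\delta}$, and the Feynman--Kac rate $\kappa\gamma(\xi/\kappa)+\tfrac{\delta^{2}}{2}(\xi/\kappa)^{2}$. Observing that $\xi/\kappa=\tfrac{\theta(1-\alpha)}{\kappa}+\tfrac{\theta\alpha\sigma\rho}{\delta}$ matches the expression in the lemma exactly, and this bookkeeping is the only real obstacle: the calculations are routine, but the sign conventions and the interplay between the $\rho$ contributions from Brownian correlation and from the substitution of $W_{t}$ are easy to mishandle. No serious analytic difficulty arises because the Vasicek coefficient ODE is linear and globally stable, making the long-time asymptotics unambiguous.
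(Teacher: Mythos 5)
Your proposal is correct and follows essentially the same route as the paper: the decomposition $B_{t}=\rho W_{t}+\sqrt{1-\rho^{2}}Z_{t}$ with conditioning to produce the $\frac{1}{2}\theta^{2}\alpha^{2}\sigma^{2}(1-\rho^{2})$ term, the substitution $W_{t}=\frac{1}{\delta}\bigl(r_{t}-r_{0}-\kappa\gamma t+\kappa\int_{0}^{t}r_{s}ds\bigr)$, and the Feynman--Kac affine ansatz $u(t,r)=e^{A(t)+B(t)r}$ yielding exactly the paper's ODE system with $B(t)\to\xi/\kappa=\frac{\theta(1-\alpha)}{\kappa}+\frac{\theta\alpha\sigma\rho}{\delta}$. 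Your explicit remarks that the $B$-equation is linear (hence globally stable) and that the bounded term $B(t)r$ contributes nothing to the rate make the final limiting step slightly more transparent than the paper's ``it is not hard to see,'' but the argument is the same.
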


\begin{proof}
Write $B_{t}=\rho W_{t}+\sqrt{1-\rho^{2}}Z_{t}$, where $Z_{t}$ is a standard Brownian motion independent of $B_{t}$ and $W_{t}$.
Therefore, we have
\begin{align}
&E_{r_{0}=r}\left[e^{\theta(\alpha\sigma B_{t}+(1-\alpha)\int_{0}^{t}r_{s}ds)}\right]
\\
&=E_{r_{0}=r}\left[e^{\theta(\alpha\sigma\rho W_{t}+\alpha\sigma\sqrt{1-\rho^{2}}Z_{t}+(1-\alpha)\int_{0}^{t}r_{s}ds)}\right]
\nonumber
\\
&=E_{r_{0}=r}\left[e^{\frac{1}{2}\theta^{2}\alpha^{2}\sigma^{2}(1-\rho^{2})t+\theta(\alpha\sigma\rho W_{t}+(1-\alpha)\int_{0}^{t}r_{s}ds)}\right]
\nonumber
\\
&=E_{r_{0}=r}\left[e^{
\theta(1-\alpha)\int_{0}^{t}r_{s}ds+\frac{\theta\alpha\sigma\rho}{\delta}r_{t}
+\frac{\theta\alpha\sigma\kappa\rho}{\delta}\int_{0}^{t}r_{s}ds}\right]
e^{-\frac{\theta\alpha\sigma\rho r_{0}}{\delta}
-\frac{\theta\alpha\sigma\kappa\gamma\rho}{\delta}t
+\frac{1}{2}\theta^{2}\alpha^{2}\sigma^{2}(1-\rho^{2})t},\nonumber
\end{align}
where the last line uses the fact that
$W_{t}=\frac{r_{t}-r_{0}}{\delta}-\frac{\kappa\gamma}{\delta}t+
\frac{\kappa}{\delta}\int_{0}^{t}r_{s}ds$.
\end{proof}
Let $u(t,r):=E_{r_{0}=r}\left[e^{
\theta(1-\alpha)\int_{0}^{t}r_{s}ds+\frac{\theta\alpha\sigma\rho}{\delta}r_{t}
+\frac{\theta\alpha\sigma\kappa\rho}{\delta}\int_{0}^{t}r_{s}ds}\right]$.
Then, $u(t,r)$ satisfies the following partial differential equation,
\begin{equation}
\begin{cases}
\frac{\partial u}{\partial t}
=\kappa(\gamma-r)\frac{\partial u}{\partial r}+\frac{1}{2}\delta^{2}
\frac{\partial u^{2}}{\partial r^{2}}+\left(\theta(1-\alpha)
+\frac{\theta\alpha\sigma\kappa\rho}{\delta}\right)ru=0,
\\
u(0,r)=e^{\frac{\theta\alpha\sigma\rho}{\delta}r}.
\end{cases}
\end{equation}
Let us try $u(t,r)=e^{A(t)+B(t)r}$. Then, we get
\begin{equation}
\begin{cases}
A'(t)=\kappa\gamma B(t)+\frac{1}{2}\delta^{2}B(t)^{2},
\\
B'(t)=-\kappa B(t)+\theta(1-\alpha)+\frac{\theta\alpha\sigma\kappa\rho}{\delta},
\\
A(0)=0,\quad B(0)=\frac{\theta\alpha\sigma\rho}{\delta}.
\end{cases}
\end{equation}
It is not hard to see that $B(t)\rightarrow\frac{\theta(1-\alpha)}{\kappa}+\frac{\theta\alpha\sigma\rho}{\delta}$
as $t\rightarrow\infty$ and therefore
\begin{align}
\frac{A(t)}{t}
&=\frac{\kappa\gamma}{t}\int_{0}^{t}B_{s}ds
+\frac{\delta^{2}}{2t}\int_{0}^{t}B(s)^{2}ds
\\
&\rightarrow\kappa\gamma\left(\frac{\theta(1-\alpha)}{\kappa}
+\frac{\theta\alpha\sigma\rho}{\delta}\right)
+\frac{\delta^{2}}{2}\left(\frac{\theta(1-\alpha)}{\kappa}
+\frac{\theta\alpha\sigma\rho}{\delta}\right)^{2},
\nonumber
\end{align}
as $t\rightarrow\infty$. Hence, we conclude that
\begin{align}
&\lim_{t\rightarrow\infty}\frac{1}{t}\log
E_{r_{0}=r}
\left[e^{\theta(\alpha\sigma B_{t}+(1-\alpha)\int_{0}^{t}r_{s}ds)}\right]
\\
&=\kappa\gamma\left(\frac{\theta(1-\alpha)}{\kappa}
+\frac{\theta\alpha\sigma\rho}{\delta}\right)
+\frac{\delta^{2}}{2}\left(\frac{\theta(1-\alpha)}{\kappa}
+\frac{\theta\alpha\sigma\rho}{\delta}\right)^{2}
\nonumber
\\
&\qquad\qquad\qquad
-\frac{\theta\alpha\sigma\kappa\gamma\rho}{\delta}
+\frac{1}{2}\theta^{2}\alpha^{2}\sigma^{2}(1-\rho^{2})
.\nonumber
\end{align}

\begin{theorem}
\begin{align}
\Lambda(\alpha)&=\lim_{t\rightarrow\infty}\frac{1}{t}\log E[u(V_{t})]
\\
&=\kappa\gamma\left(\frac{\theta(1-\alpha)}{\kappa}
+\frac{\theta\alpha\sigma\rho}{\delta}\right)
+\frac{\delta^{2}}{2}\left(\frac{\theta(1-\alpha)}{\kappa}
+\frac{\theta\alpha\sigma\rho}{\delta}\right)^{2}
\nonumber
\\
&\qquad\qquad
-\frac{\theta\alpha\sigma\kappa\gamma\rho}{\delta}
+\frac{1}{2}\theta^{2}\alpha^{2}\sigma^{2}(1-\rho^{2})
+\theta\alpha\mu-\frac{1}{2}\theta\alpha^{2}\sigma^{2}.
\nonumber
\end{align}
When $\frac{\delta^{2}\theta}{2\kappa^{2}}
-\frac{\delta\theta\sigma\rho}{\kappa}
+\frac{\sigma^{2}\theta}{2}-\frac{\sigma^{2}}{2}\geq 0$,
the optimal $\alpha^{\ast}$ is given by
\begin{equation}
\alpha^{\ast}=
\begin{cases}
0 &\text{if 
$\gamma+\frac{\delta^{2}\theta}{2\kappa^{2}}
\geq\frac{1}{2}(\theta-1)\sigma^{2}+\mu$},
\\
1 &\text{if $\gamma+\frac{\delta^{2}\theta}{2\kappa^{2}}
<\frac{1}{2}(\theta-1)\sigma^{2}+\mu$}.
\end{cases}
\end{equation}
Otherwise, the optimal $\alpha^{\ast}$ is given by
\begin{equation}
\alpha^{\ast}
=
\begin{cases}
1 &\text{if $\alpha^{\dagger}\geq 1$},
\\
\alpha^{\dagger} &\text{if $\alpha^{\dagger}\in(0,1)$},
\\
0 &\text{if $\alpha^{\dagger}\leq 0$},
\end{cases}
\end{equation}
where
\begin{equation}
\alpha^{\dagger}=
\frac{-\gamma\theta+\theta\mu+\frac{\delta\theta^{2}\sigma\rho}{\kappa}
-\frac{\delta^{2}\theta^{2}}{\kappa^{2}}}
{2\theta\left[-\frac{\delta^{2}\theta}{2\kappa^{2}}
+\frac{\delta\theta\sigma\rho}{\kappa}
-\frac{\sigma^{2}\theta}{2}+\frac{\sigma^{2}}{2}\right]}.
\end{equation}
\end{theorem}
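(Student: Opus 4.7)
The plan is to assemble $\Lambda(\alpha)$ from Lemma \ref{VLemma}, recognize it as a quadratic polynomial in $\alpha$, and then do a standard case analysis based on the sign of the leading coefficient.

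First I would observe that combining the wealth formula $V_t = V_0 e^{\alpha\mu t + \alpha\sigma B_t - \frac{1}{2}\alpha^2\sigma^2 t + (1-\alpha)\int_0^t r_s\,ds}$ with Lemma \ref{VLemma} immediately yields the stated expression for $\Lambda(\alpha)$: the deterministic factors $\theta\alpha\mu - \frac{1}{2}\theta\alpha^2\sigma^2$ contribute linearly in $t$, while $\frac{1}{t}\log$ of the stochastic exponential is furnished by the lemma.

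The key structural observation is that $\Lambda(\alpha)$ is a polynomial of degree at most two in $\alpha$: the affine function $\frac{\theta(1-\alpha)}{\kappa} + \frac{\theta\alpha\sigma\rho}{\delta}$ is squared once, and every other contribution is at most quadratic. Expanding and collecting, I would verify that the coefficient of $\alpha^2$ equals $\theta\bigl[\tfrac{\delta^2\theta}{2\kappa^2} - \tfrac{\delta\theta\sigma\rho}{\kappa} + \tfrac{\sigma^2\theta}{2} - \tfrac{\sigma^2}{2}\bigr]$ and that the coefficient of $\alpha$ simplifies to $-\gamma\theta + \theta\mu + \tfrac{\delta\theta^2\sigma\rho}{\kappa} - \tfrac{\delta^2\theta^2}{\kappa^2}$, with the crossed $\pm\tfrac{\theta\sigma\kappa\gamma\rho}{\delta}$ terms cancelling.

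With $\Lambda$ identified as a quadratic, the two cases in the statement correspond to the two possible signs of the leading coefficient. If this coefficient is $\geq 0$, then $\Lambda$ is convex (or affine) on $[0,1]$, so the maximum is attained at an endpoint; a direct computation gives $\Lambda(0) = \theta\gamma + \tfrac{\delta^2\theta^2}{2\kappa^2}$ and $\Lambda(1) = \theta\mu + \tfrac{\theta(\theta-1)\sigma^2}{2}$, and the inequality $\Lambda(1) \geq \Lambda(0)$ rearranges precisely to $\mu + \tfrac{(\theta-1)\sigma^2}{2} \geq \gamma + \tfrac{\delta^2\theta}{2\kappa^2}$, giving the advertised dichotomy. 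If the leading coefficient is negative, $\Lambda$ is strictly concave, so its unique unconstrained maximizer is $\alpha^\dagger = -b/(2a)$ with $a,b$ as computed above; after flipping signs in numerator and denominator this is exactly the stated closed form, and the constrained optimum on $[0,1]$ is the usual clipping of $\alpha^\dagger$.

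The main obstacle is simply bookkeeping: verifying that the coefficient of $\alpha^2$ reduces exactly to the quoted expression (note the clean cancellation of the $\pm\tfrac{\theta\sigma\kappa\gamma\rho}{\delta}$ terms both in the $\alpha^1$ coefficient and implicitly in $\Lambda(1)$) and confirming that the published $\alpha^\dagger$ agrees with $-b/(2a)$. Everything beyond this is a textbook one-variable quadratic optimization.
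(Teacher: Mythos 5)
Your proposal is correct and follows essentially the same route as the paper: assemble $\Lambda(\alpha)$ from Lemma \ref{VLemma}, note it is a quadratic in $\alpha$ with leading coefficient $\theta\bigl[\tfrac{\delta^2\theta}{2\kappa^2}-\tfrac{\delta\theta\sigma\rho}{\kappa}+\tfrac{\sigma^2\theta}{2}-\tfrac{\sigma^2}{2}\bigr]$, compare $\Lambda(0)$ and $\Lambda(1)$ in the convex case, and clip the vertex $\alpha^\dagger=-b/(2a)$ in the concave case. The coefficient computations you outline (including the cancellation of the $\pm\tfrac{\theta\sigma\kappa\gamma\rho}{\delta}$ terms) check out against the stated formulas.
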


\begin{proof}
Recall that $V_{t}=V_{0}e^{\alpha\mu t+\alpha\sigma B_{t}
-\frac{1}{2}\alpha^{2}\sigma^{2}t+(1-\alpha)\int_{0}^{t}r_{s}ds}$.
Applying Lemma \ref{VLemma}, we have
\begin{align}
\Lambda(\alpha)&=\lim_{t\rightarrow\infty}\frac{1}{t}\log E[u(V_{t})]
\\
&=\kappa\gamma\left(\frac{\theta(1-\alpha)}{\kappa}
+\frac{\theta\alpha\sigma\rho}{\delta}\right)
+\frac{\delta^{2}}{2}\left(\frac{\theta(1-\alpha)}{\kappa}
+\frac{\theta\alpha\sigma\rho}{\delta}\right)^{2}
\nonumber
\\
&\qquad\qquad
-\frac{\theta\alpha\sigma\kappa\gamma\rho}{\delta}
+\frac{1}{2}\theta^{2}\alpha^{2}\sigma^{2}(1-\rho^{2})
+\theta\alpha\mu-\frac{1}{2}\theta\alpha^{2}\sigma^{2}.
\nonumber
\end{align}
Thus, $\Lambda(\alpha)$
is a quadratic function of $\alpha$. 
If the coefficient of $\alpha^{2}$ in $\Lambda(\alpha)$ is non-negative,
i.e.
\begin{equation}
\theta\left[\frac{\delta^{2}\theta}{2\kappa^{2}}
-\frac{\delta\theta\sigma\rho}{\kappa}
+\frac{\sigma^{2}\theta}{2}-\frac{\sigma^{2}}{2}\right]\geq 0,
\end{equation}
then $\Lambda(\alpha)$ is convex in $\alpha$ and 
the optimal $\alpha$ is achieved at either $\alpha=0$
or $\alpha=1$. Indeed, one can compute that
\begin{equation}
\Lambda(0)=\gamma\theta+\frac{\delta^{2}\theta^{2}}{2\kappa^{2}},
\end{equation}
and
\begin{equation}
\Lambda(1)=\frac{1}{2}(\theta^{2}-\theta)\sigma^{2}+\theta\mu.
\end{equation}
Therefore, when $\frac{\delta^{2}\theta}{2\kappa^{2}}
-\frac{\delta\theta\sigma\rho}{\kappa}
+\frac{\sigma^{2}\theta}{2}-\frac{\sigma^{2}}{2}\geq 0$,
\begin{equation}
\alpha^{\ast}=
\begin{cases}
0 &\text{if 
$\gamma+\frac{\delta^{2}\theta}{2\kappa^{2}}
\geq\frac{1}{2}(\theta-1)\sigma^{2}+\mu$},
\\
1 &\text{if $\gamma+\frac{\delta^{2}\theta}{2\kappa^{2}}
<\frac{1}{2}(\theta-1)\sigma^{2}+\mu$}.
\end{cases}
\end{equation}

If the coefficient of $\alpha^{2}$ in $\Lambda(\alpha)$ is negative,
the function $\Lambda(\alpha)$ has a unique maximum at some 
$\alpha^{\dagger}\in(0,\infty)$ and
\begin{equation}
\alpha^{\ast}
=
\begin{cases}
1 &\text{if $\alpha^{\dagger}\geq 1$},
\\
\alpha^{\dagger} &\text{if $\alpha^{\dagger}\in(0,1)$},
\\
0 &\text{if $\alpha^{\dagger}\leq 0$},
\end{cases}
\end{equation}
where
\begin{equation}
\alpha^{\dagger}=
\frac{-\gamma\theta+\theta\mu+\frac{\delta\theta^{2}\sigma\rho}{\kappa}
-\frac{\delta^{2}\theta^{2}}{\kappa^{2}}}
{2\theta\left[-\frac{\delta^{2}\theta}{2\kappa^{2}}
+\frac{\delta\theta\sigma\rho}{\kappa}
-\frac{\sigma^{2}\theta}{2}+\frac{\sigma^{2}}{2}\right]}.
\end{equation}
\end{proof}

\section{Concluding Remarks}

In this article, we studied the optimal long-term strategy for a static investor for the Heston model, the 3/2 model, the jump diffusion
model and the Black-Scholes model with Vasicek interest rate.
It will be interesting to generalize our results 
to the multivariate case, i.e. when the investor can invest in a basket of stocks $S^{(i)}_{t}$, $1\leq i\leq d$,
and the wealth process is given by
\begin{equation}
\frac{dV_{t}}{V_{t}}=\sum_{i=1}^{d}\alpha_{i}\frac{dS^{(i)}_{t}}{S^{(i)}_{t}}+\left(1-\sum_{i=1}^{d}\alpha_{i}\right)r_{t}dt.
\end{equation}
One can also study the Heston model, the 3/2 model, and the jump diffusion model with stochastic interest rate.
In Section \ref{VSection}, the interest rate is assumed to follow the Vasicek model. A drawback of the Vasicek model
is that the process can go negative with positive probability. Our analysis in Section \ref{VSection} cannot be directly
applied to the Cox-Ingersoll-Ross interest rate unless one assumes that $\rho=0$. This can be left for the future investigations.

\section*{Acknowledgements}

The author also wishes to thank an annonymous referee for very careful readings of the manuscript and helpful suggestions
that greatly improved the paper.
The author is supported by NSF grant DMS-0904701, DARPA grant and MacCracken Fellowship at New York University.


\begin{thebibliography}{14}

\bibitem{Ahn}
Ahn, D. and B. Gao. (1999).
A parametric nonlinear model of term structure dynamics.
\textit{Review of Financial Studies}.
\textbf{12}, 721-762.

\bibitem{Abramowitz}
Abramowitz, M. and I. Stegun.
\textit{Handbook of Mathematical Functions}, New York, Dover Publications, Inc. 1964.

\bibitem{Carr}
Carr, P. and J. Sun. (2007).
A new approach for option pricing under stochastic volatility.
\textit{Review of Derivatives Research}.
\textbf{10}, 87-150.

\bibitem{Cox}
Cox, J. C., Ingersoll, J. E. and S. A. Ross. (1985). 
A theory of the term structure of interest rates. 
\textit{Econometrica}. \textbf{53}, 385-407.

\bibitem{Dembo}
Dembo, A. and O. Zeitouni. 
\textit{Large Deviations Techniques and Applications}.
2nd Edition, New York, Springer, 1998.

\bibitem{Drimus}
Drimus, G. (2012).
Options on realized variance by transform methods: a non-affine stochastic volatility model.
\textit{Quantitative Finance}.
\textbf{12}, 1679-1694.

\bibitem{Feller}
Feller, W. (1951).
Two singular diffusion problems.
\textit{Annals of Mathematics}.
\textbf{54}, 173-182.

\bibitem{Fleming}
Fleming, W. H. and S.-J. Sheu. (1999).
Optimal long term growth rate of expected utility of wealth.
\textit{Annals of Applied Probability}.
\textbf{9}, 871-903.

\bibitem{Heston}
Heston, S. L. (1993).
A closed-form solution for options with stochastic volatility with applications to bond and currency options.
\textit{The Review of Financial Studies}
\textbf{6}, 327-343.

\bibitem{Lewis}
Lewis, A. L.
\textit{Option Valuation under Stochastic Volatility}, Finance Press, 
Newport Beach, 2000.

\bibitem{Pham}
Pham, H. (2003).
A large deviations approach to optimal long term investment.
\textit{Finance and Stochastics}. \textbf{7}, 169-195.

\bibitem{PhamII}
Pham, H.
Some applications and methods of large deviations in finance
and insurance.
In \textit{Paris-Princeton Lectures on Mathematical Finance}, 2004.

\bibitem{Stutzer} 
Stutzer, M. (2003).
Portfolio choice with endogenous utility: a large deviations approach.
\textit{Journal of Econometrics}. \textbf{116}, 365-386.

\bibitem{Vasicek}
Vasicek, O. (1977).
An equilibrium characterisation of the term structure.
\textit{Journal of Financial Economics}.
\textbf{5}, 177-188.

\end{thebibliography}
\end{document}